\newcommand{\bbF}{\mathbb{F}}
\newcommand{\bbN}{\mathbb{N}}
\newcommand{\mctt}{MC_{TT}}
\newtheorem{assumption}{Assumption}
\title{On the Complexity of Computing Two Nonlinearity Measures}
\author{Magnus Gausdal Find}
\institute{Department of Mathematics and Computer Science\\
University of Southern Denmark}
\begin{document}
\maketitle

\begin{abstract}
We study the computational complexity of two Boolean nonlinearity measures:
the \emph{nonlinearity}
and the \emph{multiplicative complexity}.
We show that if one-way functions exist, no algorithm can compute the
multiplicative complexity in time $2^{O(n)}$ given the truth
table of length $2^n$, in fact under the same assumption
it is impossible to approximate the multiplicative complexity within
a factor of $(2-\epsilon)^{n/2}$.
When given a circuit, the problem of determining
the multiplicative complexity is in the second level of the polynomial
hierarchy.  For nonlinearity, we show that
it is $\#\mathbf{P}$ hard to compute given a function represented by a circuit.
\end{abstract}

\section{Introduction}
In many cryptographical settings, such as stream ciphers, block ciphers
and hashing, functions being used must be deterministic but
should somehow  ``look'' random.
Since these two desires are contradictory in nature,
one might settle with functions satisfying certain \emph{properties}
that
random Boolean functions possess with high probability.
One property is to be somehow  different from linear functions.
This can be quantitatively delineated using so called ``nonlinearity measures''.
 Two examples of nonlinearity measures are the
\emph{nonlinearity}, i.e. the Hamming distance
to the closest affine function, and
the \emph{multiplicative complexity}, i.e. the smallest number of AND
gates in a circuit over the basis $(\land,\oplus,1)$ computing the function.
For results relating these measures to each other
and cryptographic properties we refer to
\cite{carletbook,DBLP:conf/ciac/BoyarFP13}, and the references
therein.
The important point for this paper is that
there is a fair number of results on the form
``if $f$ has low value according to measure
$\mu$, $f$ is vulnerable to the following attack ...''.
Because of this, it was a design criteria in the 
Advanced Encryption Standard 
to have parts with high nonlinearity \cite{Standards2001}.
In a concrete situation, $f$ is an explicit, finite function, so
it is natural to ask how hard it is to compute $\mu$
given (some representation of) $f$. In this paper, the measure
$\mu$ will be either multiplicative complexity or nonlinearity.
We consider the two cases where $f$ is being represented by its
truth table, or by a circuit computing $f$.

We should emphasize that multiplicative complexity is an
interesting measure for other reasons than alone
being a measure of nonlinearity:
In many applications it is harder, in some sense,
to handle AND gates than XOR gates, so
one is interested in a circuit over $(\land,\oplus,1)$
with a small
number of AND gates, rather than a circuit with the
smallest number of \emph{gates}.
Examples of this include protocols for secure multiparty computation
 (see e.g. \cite{DBLP:conf/stoc/ChaumCD88,kolesnikov}),
non-interactive secure proofs of knowledge
\cite{DBLP:journals/joc/BoyarDP00},
and fully homomorphic encryption (see for example 
\cite{DBLP:conf/focs/Vaikuntanathan11}).

It is a main topic in several papers (see e.g. 
\cite{DBLP:journals/joc/BoyarMP13,DBLP:journals/jc/CenkO10,DBLP:journals/corr/abs-1108-2830}\footnote{Here we mean concrete finite functions, as opposed to
giving good (asymptotic) upper bounds for an infinite family of functions})
to find circuits with few AND gates for specific functions using
either exact or heuristic techniques.
Despite this and the applications mentioned above, it appears
that the  {computational} hardness has not been studied before.

The two measures have very different complexities, 
depending on the representation of $f$.

\paragraph{Organization of the Paper and Results}
In the following section, we introduce the problems and
necessary definitions. All our hardness results will
be based on assumptions stronger than $\mathbf{P}\neq \mathbf{NP}$,
more precisely the existence of pseudorandom function families
and the ``Strong Exponential Time Hypothesis''.
In Section~\ref{sec:ttinput} we show that
if pseudorandom function families exist, the multiplicative complexity
of a function represented by its truth table cannot be computed
(or even approximated with a factor $(2-\epsilon)^{n/2}$)
in polynomial time. This should be contrasted
to the well known fact that nonlinearity can be computed in almost linear
time using the Fast Walsh Transformation. In Section~\ref{sec:circuitinput},
we consider the problems when the function is represented
by a circuit.  We show that in terms of time complexity, under
our assumptions, the situations differ very little from
the case where the function is represented by a truth table.
However, in terms of complexity classes, the picture looks
quite different: Computing the nonlinearity is $\#\mathbf{P}$ hard,
and multiplicative complexity is in the second level of the polynomial
hierarchy.

\section{Preliminaries}
In the following, we let $\bbF_2$ be the finite field of size $2$ 
and $\bbF_2^n$ the $n$-dimensional vector space over $\bbF_2$.
We denote by $B_n$ the set of Boolean functions,
mapping from $\bbF_2^n$ into $\bbF_2$.
We say that $f\in B_n$ is
\emph{affine} if there exist $\mathbf{a}\in \bbF_2^n,c\in \bbF_2$
such that
$f(\mathbf{x})=\mathbf{a}\cdot \mathbf{x}+c$
and \emph{linear} if $f$ is affine with $f(\mathbf{0})=0$, with
arithmetic over $\bbF_2$. This gives the symbol ``$+$''
an overloaded meaning, since we also use it for
addition over the reals. It should be clear from the context, 
what is meant.

In the following an XOR-AND circuit is a circuit
with fanin $2$ over the basis $(\land,\oplus,1)$ (arithmetic
over $GF(2)$). All circuits from now on are assumed to be XOR-AND circuits.
We adopt standard terminology for circuits
(see e.g. \cite{wegener87}).
If nothing else is specified, for a circuit $C$ we let $n$
be the number of inputs and $m$ be the number of gates, 
which we refer to as the \emph{size} of $C$, denoted $|C|$.
For a circuit $C$ we let $f_C$ denote the function computed
by $C$, and $c_\wedge (C)$ denote the number of AND gates
in $C$.

For a function $f\in B_n$, the
\emph{multiplicative complexity} of $f$, denoted $c_\wedge(f)$,
is the smallest number of AND gates necessary and sufficient
in an XOR-AND circuit computing $f$. 
The \emph{nonlinearity} of a function $f$, denoted $NL(f)$ is the
Hamming distance to its closest affine function, more precisely
\[
NL(f) = 2^{n} -
\max_{\mathbf{a}\in \bbF_2^n,c\in\bbF_2}|
\{\mathbf{x}\in\bbF_2^{n}|f(\mathbf{x})=\mathbf{a}\cdot \mathbf{x}+ c \}
|.
\]

We consider four decision problems in this paper:
$NL_C$, $NL_{TT}$, $MC_C$ and $MC_{TT}$. For $NL_C$ (resp $MC_C$) the input
is a circuit and a target $s\in \mathbb{N}$ and the goal is to determine
whether the nonlinearity (resp. multiplicative complexity) of $f_C$ is at
most $s$.
For $NL_{TT}$ (resp. $\mctt$) the input is a truth
table of length $2^n$ of a function $f\in B_n$ and a target $s\in \bbN$, with
the goal to determine whether the nonlinearity (resp. multiplicative
complexity) of $f$ is at most $s$.

We let $a \in_R D$ denote that $a$ is distributed uniformly at random
from $D$.   We will need the following definition:
\begin{definition}
  A family of Boolean functions $f=\{f_{n}\}_{n\in \bbN}$,
$f_n\colon \{0,1 \}^n \times \{0,1\}^n \rightarrow \{0,1 \}$,
is a \emph{pseudorandom function family} if $f$ can be computed
in polynomial time
and for every probabilistic polynomial time oracle Turing
machine $A$,
\[
|\Pr_{k\in_R \{0,1\}^n}[A^{f_n(k,\cdot)}(1^n)=1] - 
\Pr_{g\in_R B_n}[A^{g(\cdot)}(1^n)=1]
 |
\leq
n^{-\omega(1)}.
\]
\end{definition}
Here $A^{H}$ denotes that the algorithm $A$ has oracle access
to a function $H$, that might be $f_n(\mathbf{k},\cdot)$ for
some $\mathbf{k}\in \bbF_2^n$ or a random $g\in B_n$, for
more details see \cite{DBLP:books/arorabarak}.
Some of our hardness results will be based on the following assumption.
\begin{assumption}
\label{ass:1way}
  There exist
pseudorandom function
families. 
\end{assumption}

It is known that pseudorandom function families exist
if one-way functions exist \cite{DBLP:journals/jacm/GoldreichGM86,DBLP:journals/siamcomp/HastadILL99,DBLP:books/arorabarak}, so we consider
Assumption~\ref{ass:1way} to be very plausible.
We will also use the following assumptions on the exponential
complexity of $SAT$, due to Impagliazzo and Paturi.
\begin{assumption}[Strong Exponential Time Hypothesis
\cite{DBLP:journals/jcss/ImpagliazzoP01}]
\label{ass:seth} 
For\\ any fixed $c<1$, no algorithm runs in time $2^{cn}$  
and computes $SAT$ correctly.
\end{assumption}

\section{Truth Table as Input}
\label{sec:ttinput}
It is a well known result that given a function $f\in B_n$ represented
by a truth table of length $2^n$, 
the nonlinearity can be computed using $O(n2^{n})$ basic arithmetic
operations. This is done using the ``Fast Walsh Transformation''
(See \cite{sloane1977theory} or chapter 1 in \cite{ryanodonnell}). 

  In this section we show that the situation is different for
multiplicative complexity:  Under Assumption~\ref{ass:1way},
$MC_{TT}$ cannot be computed in polynomial time.

In  \cite{KabanetsC00},  Kabanets and Cai showed
that if \emph{subexponentially} strong pseudorandom function
families exist, 
the Minimum Circuit Size Problem (MCSP) (the problem
of determining the size of a smallest circuit of a function given
its truth table) cannot be solved
in polynomial time. The proof goes by showing that if
MCSP could be solved in polynomial time this would
induce a natural combinatorial property (as
defined in \cite{dblp:journals/jcss/RazborovR97}) useful against
circuits of polynomial size. Now by the celebrated result of Razborov and Rudich 
\cite{dblp:journals/jcss/RazborovR97}, 
this implies the nonexistence of subexponential pseudorandom
function families.

Our proof below is similar
in that we use results from
\cite{boyar2000multiplicative} 
in a way similar to what is done in
 \cite{KabanetsC00,dblp:journals/jcss/RazborovR97} (see also
the excellent exposition in \cite{DBLP:books/arorabarak}). However
instead of showing the existence of a natural and useful
combinatorial property and appealing to limitations of natural proofs, we give
an explicit polynomial time algorithm for breaking any pseudorandom function
family, contradicting Assumption~\ref{ass:1way}.

\begin{theorem}
\label{hardnessofcomputeTT}
Under Assumption~\ref{ass:1way}, on input a truth
table of length $2^n$, $\mctt$  cannot be computed in
time  $2^{O(n)}$.
\end{theorem}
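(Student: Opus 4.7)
The plan is to assume for contradiction that there is an algorithm $A$ deciding $\mctt$ in time $2^{cn}$ for some constant $c$, and to use $A$ to construct a probabilistic polynomial-time distinguisher against any pseudorandom function family, contradicting Assumption~\ref{ass:1way}. Two facts underlie the construction. First, because a PRF family $\{f_n\}$ is computable in polynomial time, there is a constant $d$ such that, for every $k$ and every $n$, the function $f_n(k,\cdot)$ has an XOR--AND circuit of size at most $n^d$; in particular $c_\wedge(f_n(k,\cdot)) \leq n^d$, and this upper bound is preserved under restriction, since fixing some inputs to constants only reduces the circuit. Second, by the counting argument of \cite{boyar2000multiplicative}, a uniformly random function $h \in B_m$ satisfies $c_\wedge(h) \geq 2^{m/2}/\mathrm{poly}(m)$ with probability $1 - o(1)$.

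Given a PRF family with circuit-size exponent $d$, I would set $m := (2d+1)\lceil \log n \rceil$ and define the distinguisher $D$ on input $1^n$ and oracle $H \colon \bbF_2^n \to \bbF_2$ as follows: query $H$ at every point of the form $(0^{n-m}, y)$ for $y \in \bbF_2^m$ to assemble the truth table of the restriction $\tilde H(y) := H(0^{n-m}, y)$; run $A$ on this truth table to compute $\mu := c_\wedge(\tilde H)$; and output $1$ iff $\mu \leq n^d$. The distinguisher makes $2^m = n^{2d+1}$ queries and runs $A$ on an $m$-input truth table in time $2^{cm} = n^{c(2d+1)}$, so its overall running time is $\mathrm{poly}(n)$.

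For correctness: when $H = f_n(k,\cdot)$, the restriction $\tilde H$ inherits the bound $c_\wedge(\tilde H) \leq n^d$, so $D$ outputs $1$ with probability $1$; when $H$ is uniform in $B_n$, the restriction $\tilde H$ is uniform in $B_m$, and the lower bound yields $c_\wedge(\tilde H) \geq 2^{m/2}/\mathrm{poly}(m) = n^{d+1/2}/\mathrm{polylog}(n) > n^d$ for all large enough $n$, so $D$ outputs $0$ with probability $1 - o(1)$. The resulting constant distinguishing advantage contradicts the pseudorandomness of $\{f_n\}$.

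The main obstacle is the choice of $m$: it has to be large enough that the Boyar--Peralta--Pochuev lower bound $2^{m/2}/\mathrm{poly}(m)$ strictly exceeds the PRF's upper bound $n^d$, yet small enough that $2^{cm}$ stays polynomial in $n$. Setting $m = \Theta(\log n)$ with the constant tuned to both $c$ (from the supposed $\mctt$-algorithm) and $d$ (from the fixed PRF family) balances these requirements, at the cost of having to build a separate distinguisher per family; this is enough, since Assumption~\ref{ass:1way} asserts the existence of a single family surviving all polynomial-time attacks.
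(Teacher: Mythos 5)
Your proposal is correct and follows essentially the same route as the paper: restrict the oracle to $\Theta(\log n)$ of its inputs, run the assumed $\mctt$ algorithm on the resulting polynomial-size truth table, and separate the two cases via the polynomial circuit-size bound for the PRF (preserved under restriction) versus the Boyar--Peralta--Pochuev counting bound showing a random function on $m$ inputs has multiplicative complexity about $2^{m/2}$ with probability $1-o(1)$. The only differences are cosmetic (output convention, the exact constant in $m$, and packaging the counting lemma as a high-probability lower bound rather than an explicit probability estimate).
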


\begin{proof}
Let $\{ f_n\}_{n\in \bbN}$ be a pseudorandom function family.
Since $f$ is computable in polynomial time
it has circuits of polynomial size (see e.g. \cite{DBLP:books/arorabarak}),
so we can choose $c\geq 2$
such that $c_\wedge(f_n)\leq n^c$ for all $n\geq 2$.
Suppose for the sake of contradiction that
some algorithm computes $\mctt$ in time $2^{O(n)}$.
We now describe an algorithm that breaks the pseudorandom function family.
The algorithm has access to an oracle $H\in B_n$,
along with the promise either $H(\mathbf{x})=f_n(\mathbf{k},\mathbf{x})$ for
$\mathbf{k}\in_R \bbF_2^n$ or $H(\mathbf{x})=g(\mathbf{x})$
for $g\in_R B_n$. The goal of the algorithm is to 
distinguish between the two cases.
Specifically our algorithm will return 
$0$ if $H(\mathbf{x})=f(\mathbf{k},\mathbf{x})$
for some $\mathbf{k}\in \bbF_2^n$, and if $H(\mathbf{x})=g(\mathbf{x})$
it will return $1$ with high probability, where the probability
is only taken over the choice of $g$. 

Let $s=10c\log n$ and
define $h\in B_s$ as $h(\mathbf{x})=H(\mathbf{x}0^{n-s})$.
Obtain the complete truth table of $h$ by querying $H$ on all
the $2^{s}=2^{10c\log n}=n^{10c}$ points.
Now compute $c_\wedge(h)$. By assumption this can be done in time
$poly(n^{10c})$.
If $c_\wedge(h)>n^c$, output $1$,
otherwise output $0$. We now want to argue that this algorithm correctly
distinguishes between the two cases.
Suppose first that $H(\mathbf{x})=f_n(\mathbf{k},\cdot)$ for
some $\mathbf{k}\in \bbF_2^n$.
One can think of $h$ as $H$ where some of the input bits are fixed.
But in this case, $H$ can also be thought of as $f_n$ with
$n$ of the input bits fixed. Now take the circuit for $f_n$ with
the minimal number of AND gates.
Fixing the value of some of the input bits clearly
cannot increase the number of AND gates, hence
$c_\wedge(h) \leq c_\wedge(f_n)\leq n^c$.

Now it remains to argue that if $H$ is a random
function, we output $1$ with high probability.
We do this by using the following lemma.

\begin{lemma}[Boyar, Peralta, Pochuev]
\label{bfpcount}
  For all $s\geq 0$, the number of functions in $B_s$ that can
be computed with an XOR-AND circuit using at most $k$ AND gates
is at most $2^{k^2+2k+2ks+s+1}$.
\end{lemma}

If $g$ is random on $B_n$, then $h$ is random on $B_{10c\log n}$,
so the probability that $c_\wedge(h)\leq n^c$ is at most:
\[
\frac{2^{(n^c)^2+2(n^c)+2(n^c)(10c\log n)+10c\log n+1}}{  2^{2^{10c\log n}}}.
\]
This tends to $0$, so if $H$ is a random function the algorithm returns
$0$ with probability $o(1)$. In total we have
\[
|\Pr_{k\in_R \{0,1\}^n}[A^{f_n(k,\cdot)}(1^n)=1] - 
\Pr_{g\in_R B_n}[A^{g(\cdot)}(1^n)=1]
 |
=
|0 - (1- o(1))
 |,
\]
concluding that if the polynomial time algoritm for deciding
$\mctt$ exists, $f$ is not a pseudorandom function family.
\qed
\end{proof}

A common question to ask about a computationally
intractable problem is how well it can be \emph{approximated}
by a polynomial time algorithm.
An algorithm approximates $c_\wedge(f)$ with approximation
factor $\rho(n)$ if it always outputs
some value in the interval $[c_\wedge(f),\rho(n)c_\wedge(f)]$.
By refining the proof above, we see that it is hard to compute
$c_\wedge(f)$ within even a modest factor.

\begin{theorem}
\label{hardnessofapproxTT}
For every constant $\epsilon>0$, under Assumption~\ref{ass:1way},
no algorithm takes the $2^n$ bit truth table of a
function $f$ and approximates $c_\wedge(f)$
with  $\rho (n) \leq (2-\epsilon)^{n/2}$
in time $2^{O(n)}$.
\end{theorem}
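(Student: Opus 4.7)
The plan is to adapt the reduction in the proof of Theorem~\ref{hardnessofcomputeTT} so that a $(2-\epsilon)^{s/2}$-factor approximation on functions of $s$ variables, rather than an exact algorithm, already suffices to distinguish the pseudorandom oracle from a uniformly random one. The key observation is that Lemma~\ref{bfpcount} actually yields a much stronger statement than the one exploited in Theorem~\ref{hardnessofcomputeTT}: for any fixed $\delta \in (0,1)$, a uniformly random $h \in B_s$ satisfies $c_\wedge(h) > (1-\delta)\, 2^{s/2}$ with probability $1-o(1)$ as $s\to\infty$. This produces a gap between the pseudorandom case ($c_\wedge(h) \leq n^c$) and the random case ($c_\wedge(h) > (1-\delta)\, 2^{s/2}$) that can be inflated arbitrarily by enlarging $s$.

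Concretely, given a pseudorandom family $\{f_n\}$ with $c_\wedge(f_n) \leq n^c$, I would let $s = \lceil C \log_2 n \rceil$ for a constant $C$ fixed below, and, exactly as in Theorem~\ref{hardnessofcomputeTT}, define $h(\mathbf{x}) = H(\mathbf{x}0^{n-s})$ for $\mathbf{x}\in \bbF_2^s$. Reading out the truth table of $h$ uses $2^s = n^{O(C)}$ oracle queries, and running the hypothesized approximation algorithm on $h$ costs $2^{O(s)} = n^{O(C)}$ time, which is polynomial in $n$. The distinguisher outputs $0$ if the algorithm's output on $h$ is at most $(2-\epsilon)^{s/2}\, n^c$, and $1$ otherwise.

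To see that this distinguisher is correct, I would verify $(1-\delta)\, 2^{s/2} > (2-\epsilon)^{s/2}\, n^c$ for all sufficiently large $n$. Writing $\beta = 2/(2-\epsilon) > 1$, this reduces to $(1-\delta)\, \beta^{s/2} > n^c$, i.e., $(1-\delta)\, n^{(C/2)\log_2 \beta} > n^c$, which holds for large $n$ once $C > 2c / \log_2 \beta$ and any fixed $0 < \delta < 1$ have been chosen. For such $C$ and $\delta$, the pseudorandom case gives $c_\wedge(h) \leq c_\wedge(f_n) \leq n^c$, so the algorithm's output is at most $(2-\epsilon)^{s/2}\, n^c$; the random case satisfies $c_\wedge(h) > (1-\delta)\, 2^{s/2}$ with probability $1-o(1)$, so the output strictly exceeds the threshold.

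The only step beyond Theorem~\ref{hardnessofcomputeTT} that needs fresh calculation is the strengthened counting bound on random $h$. Applying Lemma~\ref{bfpcount} with $k = \lfloor (1-\delta)\, 2^{s/2} \rfloor$ gives the exponent $k^2 + 2k + 2ks + s + 1 = (1-\delta)^2\, 2^s + O(s \cdot 2^{s/2}) = 2^s - \Omega(2^s)$, so the failure probability $2^{\,k^2+2k+2ks+s+1-2^s}$ is $o(1)$. With this in hand the distinguisher runs in polynomial time with advantage $1-o(1)$, contradicting Assumption~\ref{ass:1way}. The main point at which care is needed is this gap calculation; the rest of the reduction mirrors the exact case verbatim.
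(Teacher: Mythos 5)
Your proof is correct and follows the same strategy as the paper's: restrict the oracle to $s=\Theta(\log n)$ variables, read off the truth table, run the hypothesized approximation algorithm, and use Lemma~\ref{bfpcount} to argue that a random restriction has multiplicative complexity far above the threshold while a pseudorandom one stays below it. The one substantive difference is your choice of $s$. The paper keeps $s=10c\log n$ exactly as in Theorem~\ref{hardnessofcomputeTT} and uses the threshold $T=(n^c+1)(2-\epsilon)^{s/2}$; plugging $k\approx n^c(2-\epsilon)^{5c\log n}=n^{c+5c\log_2(2-\epsilon)}$ into the counting bound gives a dominant exponent $k^2\approx n^{2c+10c\log_2(2-\epsilon)}$, which is $o(2^s)=o(n^{10c})$ only when $\log_2(2-\epsilon)<4/5$, i.e.\ $\epsilon>2-2^{4/5}\approx 0.26$. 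So the paper's claim that the displayed probability tends to zero fails for small $\epsilon$ with that fixed choice of $s$. Your version, taking $s=\lceil C\log_2 n\rceil$ with $C>2c/\log_2\bigl(2/(2-\epsilon)\bigr)$ and isolating the clean statement that a random $h\in B_s$ has $c_\wedge(h)>(1-\delta)2^{s/2}$ with probability $1-o(1)$, makes the gap $(1-\delta)2^{s/2}>(2-\epsilon)^{s/2}n^c$ hold for every constant $\epsilon>0$, which is what the theorem actually asserts. In short: same reduction and same key lemma, but your parameter choice repairs the constant-tracking needed for the full range of $\epsilon$, and your reformulation of the counting step also makes transparent why $(2-\epsilon)^{n/2}$ is essentially the limit of this method.
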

\begin{proof}
Assume for the sake of contradiction that the algorithm $A$
violates the theorem.
The algorithm breaking any pseudorandom function family
works as the one in the previous proof, but instead
we return $1$ if the value returned by $A$
is at least $T=(n^c+1)\cdot(2-\epsilon)^{n/2}$.
Now arguments similar to those in the proof above show that
if $A$ returns a value larger than $T$, $H$ must be random,
and if $H$ is random, $h$ has multiplicative complexity at most
$(n^c+1)\cdot(2-\epsilon)^{n/2}$ with probability at most

\begin{align*}
\frac{
2^{\left( (n^c+1)\cdot(2-\epsilon)^{(10c\log n)/2}\right)^2        +
    2(n^c+1)\cdot(2-\epsilon)^{10c\log n/2} 10c\log n +10c\log n +1}
}{2^{2^{10c\log n}}}
\end{align*}
This tends to zero, implying that under the assumption on $A$, there is
no pseudorandom function family.
\qed 
\end{proof}

\section{Circuit as Input}
\label{sec:circuitinput}
From a practical point of view, the theorems \ref{hardnessofcomputeTT} and
\ref{hardnessofapproxTT} might seem unrealistic.
We are allowing the algorithm to be polynomial in the
length of the truth table, which is exponential in the number of variables.
However
most functions used for practical purposes admit small circuits.
To look at the entire truth table might (and in some cases
should) be infeasible.
When working with computational problems on circuits, it is somewhat
common to consider the running time in two parameters; the
number of inputs to the circuit, denoted by $n$, and the size of the circuit,
denoted by $m$. In the following we assume that $m$ is polynomial in $n$.
In this section we show that even
determining whether a circuit computes an affine function
is $\mathbf{coNP}$-complete.
In addition $NL_C$ can be computed in time $poly(m)2^n$, and
is $\# \mathbf{P}$-hard.
Under Assumption~\ref{ass:1way}, $MC_C$ cannot be computed in
time $poly(m)2^{O(n)}$, and is contained in the second level of
the polynomial hierarchy. In the following, we denote by $AFFINE$ the
set of circuits computing affine functions.

\begin{theorem}
\label{affineconp}
  $AFFINE$ is $\mathbf{coNP}$ complete.
\end{theorem}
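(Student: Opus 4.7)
The plan is to establish both containment and hardness separately. For $AFFINE \in \mathbf{coNP}$, I will exhibit a short witness of non-affinity based on the ``parallelogram'' identity: $f \in B_n$ is affine iff
\[
f(\mathbf{x}) + f(\mathbf{y}) + f(\mathbf{x}+\mathbf{y}) + f(\mathbf{0}) = 0 \quad \text{for all } \mathbf{x}, \mathbf{y} \in \bbF_2^n.
\]
The forward direction is a one-line computation. For the converse, setting $g(\mathbf{x}) := f(\mathbf{x}) + f(\mathbf{0})$ gives $g(\mathbf{0}) = 0$ and $g(\mathbf{x}+\mathbf{y}) = g(\mathbf{x}) + g(\mathbf{y})$, so $g$ is $\bbF_2$-linear and $f$ is affine. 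Thus a certificate for $\overline{AFFINE}$ is any pair $(\mathbf{x},\mathbf{y})$ violating the identity, which has polynomial size and is verified by four evaluations of the input circuit. This places $\overline{AFFINE} \in \mathbf{NP}$, hence $AFFINE \in \mathbf{coNP}$.

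For $\mathbf{coNP}$-hardness, I will reduce from $UNSAT$. Given a CNF formula $\phi(x_1,\ldots,x_n)$, first translate it into an equivalent XOR-AND circuit in polynomial time using $\neg a \equiv a \oplus 1$ and $a \vee b \equiv a \oplus b \oplus (a \wedge b)$. Then introduce two fresh inputs $y_1, y_2$ and output the XOR-AND circuit $C$ computing
\[
f(\mathbf{x}, y_1, y_2) = \phi(\mathbf{x}) \wedge y_1 \wedge y_2.
\]
If $\phi$ is unsatisfiable, then $f \equiv 0$, which is affine. Conversely, if $\phi(\mathbf{a}) = 1$ for some $\mathbf{a}$, then restricting $\mathbf{x} = \mathbf{a}$ reduces $f$ to the subfunction $y_1 \wedge y_2$, which is not affine; since every restriction of an affine function obtained by fixing variables to constants is itself affine, $f$ cannot be affine. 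This yields $\phi \in UNSAT \iff C \in AFFINE$, a polynomial-time many-one reduction.

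The only conceptually delicate point is finding a short witness for non-affinity, which the second-order-derivative identity above supplies; everything else---the CNF-to-XOR-AND compilation and the ``padding'' by two dummy AND-ed variables---is routine. I do expect to note why \emph{two} extra variables are needed: with a single $y_1$ the product $\phi(\mathbf{x}) \wedge y_1$ restricted to a satisfying $\mathbf{x} = \mathbf{a}$ becomes $y_1$, which is affine, so the reduction would fail to separate satisfiable from unsatisfiable instances. Using two copies forces an AND subfunction, which is the simplest non-affine gadget available in the XOR-AND basis.
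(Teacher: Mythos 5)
Your proof is correct, but the hardness reduction takes a genuinely different route from the paper's. The membership argument is essentially the paper's (the paper verifies the same second-derivative identity, just split into the two cases $f_C(\mathbf{0})=0$ and $f_C(\mathbf{0})=1$; your unified form $f(\mathbf{x})+f(\mathbf{y})+f(\mathbf{x}+\mathbf{y})+f(\mathbf{0})=0$ is a clean packaging of both). For hardness, the paper reduces from $TAUTOLOGY$ and essentially uses the identity map on the circuit: it first evaluates $F$ at $\mathbf{0}$ and at all $n$ unit vectors $\mathbf{e^{(i)}}$, rejects immediately if any value is $0$, and otherwise argues that $F$ is affine iff it is constant iff it is a tautology. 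You instead reduce from $UNSAT$ via the gadget $\phi(\mathbf{x})\wedge y_1\wedge y_2$, so that the only affine function the output circuit can compute is identically $0$, which happens exactly when $\phi$ is unsatisfiable. Your reduction is non-adaptive (no evaluation queries, no case analysis) and its correctness rests on the single observation that restrictions of affine functions are affine; it is also, amusingly, the same ``AND with fresh variables'' gadget the paper itself uses later for the $\#\mathbf{P}$-hardness of $NL_C$. The paper's reduction preserves the number of variables exactly, which makes the subsequent remark that $AFFINE$ requires time $2^{n(1-o(1))}$ under the Strong Exponential Time Hypothesis immediate; your reduction adds two variables, which costs only a constant factor and so supports the same conclusion. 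Your observation about why one dummy variable does not suffice (a tautology would map to the affine function $y_1$) is accurate and worth keeping.
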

\begin{proof}
First we show that it actually is in $\mathbf{coNP}$.
  Suppose $C\not\in AFFINE$.
Then if $f_C(\mathbf{0})=0$,
there exist $\mathbf{x},\mathbf{y}\in \bbF_2^n$ such that
$f_C(\mathbf{x}+\mathbf{y})\neq f_C(\mathbf{x})+f_C(\mathbf{y})$
and if $C(\mathbf{0})=1$, there exists
$\mathbf{x},\mathbf{y}$ such that
$C(\mathbf{x}+\mathbf{y})+1\neq C(\mathbf{x})+C(\mathbf{y})$.
Given $C,\mathbf{x}$ and $\mathbf{y}$ this can clearly be
computed in polynomial time.
To show hardness, we reduce from $TAUTOLOGY$, which is $\mathbf{coNP}$-complete.

Let $F$ be a formula on $n$ variables, $\mathbf{x}_1,\ldots,\mathbf{x}_n$.
Consider the following reduction:
First compute $c=F(0^n)$, then for every $\mathbf{e^{(i)}}$
(the vector with all coordinates 0 except the $i$th) compute
$F(\mathbf{e^{(i)}})$.
If any of these or $c$ are $0$, clearly $F\not\in TAUTOLOGY$, 
so we reduce to a circuit trivially not in $AFFINE$.
We claim that $F$ computes an affine function if and only if
$F\in TAUTOLOGY$. Suppose $F$ computes an affine function,
then
$F(\mathbf{x})=\mathbf{a}\cdot \mathbf{x}+c$ for some $\mathbf{a}\in \bbF_2^n$.
Then for every $\mathbf{e^{(i)}}$, we have
\[
F(\mathbf{e^{(i)}})=\mathbf{a}_i+1=1=F(\mathbf{0}),
\]
 so we must have that $\mathbf{a}=\mathbf{0}$, and $F$ is constant.
Conversely if it is not affine, it is certainly not
constant.
In particular it is not a tautology.\qed
\end{proof}  

So even determining whether the multiplicative complexity
or nonlinearity is $0$ is $\mathbf{coNP}$ complete.
In the light of the above reduction, any algorithm for $AFFINE$ induces an
algorithm for $SAT$ with essentially the same running time, so under
Assumption~\ref{ass:seth}, AFFINE needs time essentially $2^n$.
This should be contrasted with the fact that the seemingly harder
problem of computing $NL_C$ can be done in time $poly(m)2^n$
by first computing the entire truth table and then
using the Fast Walsh Transformation.
Despite the fact that $NL_C$ does not seem to require much more time
to compute than $AFFINE$, it is hard for a much larger
complexity class.

\begin{theorem}
  $NL_C$ is $\#\mathbf{P}$-hard.
\end{theorem}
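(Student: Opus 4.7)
The plan is to give a polynomial-time Turing reduction from $\#SAT$ to $NL_C$. Given a Boolean formula $\phi$ on $n$ variables, I would first construct (in polynomial time) an XOR-AND circuit $C$ on $n+3$ inputs computing
\[
g(\mathbf{x}, y_1, y_2, y_3) = \phi(\mathbf{x}) \wedge y_1 \wedge y_2 \wedge y_3,
\]
so that $|g^{-1}(1)|$ equals the number $s$ of satisfying assignments of $\phi$. The reduction then extracts $s$ from $NL(f_C)$ by binary-searching over the target parameter, using $O(n)$ oracle calls to $NL_C$ on this single circuit.

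The main claim to verify is that $NL(f_C)=s$, i.e., that the constant zero function is the unique best affine approximation to $g$. The agreement of $g$ with the all-zero affine function is $2^{n+3}-s$. Any affine function with nonzero linear part is balanced on $\bbF_2^{n+3}$, taking each value on exactly $2^{n+2}$ inputs; its agreement with $g$ decomposes into points where both equal $0$ (at most $2^{n+2}$) and points where both equal $1$ (at most $|g^{-1}(1)|=s$), giving at most $2^{n+2}+s$. Since $s\le 2^n$, we get $2^{n+3}-s > 2^{n+2}+s$, so three padding bits are enough to guarantee strict dominance regardless of $s$. The constant one function achieves only agreement $s$, so the zero function is uniquely optimal and $NL(f_C) = 2^{n+3}-(2^{n+3}-s) = s$.

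Binary-searching for $NL(f_C)$ over $t \in \{0,1,\ldots,2^n\}$ uses $O(n)$ queries to $NL_C$, each on a circuit of size polynomial in $|\phi|$, yielding a polynomial-time Turing reduction from $\#SAT$ to $NL_C$, and hence $\#\mathbf{P}$-hardness. The only mild subtlety is choosing the padding so that the zero approximation wins uniformly in $s$: a single padding bit would fail when $s$ is close to $2^n$, which is precisely why three AND-padded variables are used. Beyond this, the argument is routine counting.
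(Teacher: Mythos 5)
Your proposal is correct and follows essentially the same route as the paper: AND-padding the input circuit with fresh variables so that the constant zero function becomes the unique best affine approximation, whence the nonlinearity equals the number of satisfying assignments. You use $3$ padding bits with an explicit counting argument (agreement of any balanced affine function is at most $2^{n+2}+s < 2^{n+3}-s$) where the paper uses $10$ and argues informally, and you additionally make explicit the binary search needed to recover the count from the decision oracle; both are sound refinements of the same argument.
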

\begin{proof}
  We reduce from $\# SAT$. Let the circuit $C$ on $n$
variables be an instance of $\# SAT$.
Consider the circuit $C'$ on $n+10$ variables, defined by
\[
C'(\mathbf{x}_1,\ldots , \mathbf{x}_{n+10})
=
C(\mathbf{x}_1,\ldots , \mathbf{x}_n)\wedge \mathbf{x}_{n+1} \wedge
\mathbf{x}_{n+2}\wedge \ldots \wedge
\mathbf{x}_{n+10}.
\]

First we claim that independently of $C$, the best affine approximation
of $f_{C'}$ is always $0$.
Notice that $0$ agrees with $f_{C'}$  whenever at 
least one of $x_{n+1},\ldots ,x_{n+10}$ is $0$, and when they are
all $1$ it agrees on $|\{\mathbf{x}\in \bbF_2^n | f_{C'}(\mathbf{x})=0\} |$
many points. In total $0$ and $f_{C'}$ agree on
\[
(2^{10}-1)2^{n}+|\{\mathbf{x}\in \bbF_2^n | f_{C}(\mathbf{x})=0\} |
\]
inputs. To see that any other affine function approximates $f_{C'}$
worse than $0$, 
notice that any nonconstant affine function is balanced
and thus has to disagree with $f_{C'}$ very often.
The nonlinearity of $f_{C'}$ is therefore
\begin{align*}
NL(f_{C'}) &= 2^{n+10} -
\max_{\mathbf{a}\in \bbF_2^{n+10},c\in\bbF_2}|
\{\mathbf{x}\in\bbF_2^{n+10}|f_{C'}(\mathbf{x})=\mathbf{a}\cdot \mathbf{x}+ c \}
|\\
&=
2^{n+10} -
|
\{\mathbf{x}\in\bbF_2^{n+10}|f_{C'}(\mathbf{x})=0 \}
|\\
&=
2^{n+10}
-
\left((2^{10}-1)2^{n}+|\{\mathbf{x}\in \bbF_2^n | f_{C}(\mathbf{x})=0\} |\right)\\
&=
2^{n}
-
|\{\mathbf{x}\in \bbF_2^n | f_{C}(\mathbf{x})=0\} |\\
&=
|\{\mathbf{x}\in \bbF_2^n | f_{C}(\mathbf{x})=1\} |\\
\end{align*}
So the nonlinearity of $f_{C'}$ equals
the number satisfying assignments for $C$.

\qed
\end{proof}

So letting the nonlinearity, $s$, be a part of the input for $NL_C$ changes
the problem from being in level $1$ of the polynomial hierarchy
to be $\# \mathbf{P}$ hard,
but does not seem to change the time complexity much. The situation
for $MC_C$ is essentially the opposite, under Assumption~\ref{ass:1way},
the time  $MC_C$ needs is strictly more time than $AFFINE$, but is 
contained in $\Sigma_2^p$. 
By appealing to Theorem~\ref{hardnessofcomputeTT} and
\ref{hardnessofapproxTT}, the following theorem follows.

\begin{theorem}
  Under Assumption~\ref{ass:1way}, no polynomial time algorithm computes
 $MC_C$. Furthermore no algorithm with
running time $poly(m)2^{O(n)}$ approximates $c_\wedge(f)$ with
a factor of $(2-\epsilon)^{n/2}$ for any constant $\epsilon > 0$.
\end{theorem}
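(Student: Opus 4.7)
The plan is to obtain both parts of the theorem via a size-efficient reduction from $MC_{TT}$ to $MC_C$. The key observation is that an arbitrary function $f\in B_n$ given by its $2^n$-bit truth table admits an XOR-AND circuit of size polynomial in $2^n$: for example, convert the truth table to algebraic normal form and realize the resulting $\bbF_2$-polynomial using AND and XOR gates, giving a circuit of size $O(n\cdot 2^n)$. Thus, in time $poly(2^n)$, one can map any truth table input to a circuit $C$ of size $m=poly(2^n)$ with $f_C=f$.

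For the first claim, I would suppose for contradiction that some polynomial time algorithm $A$ decides $MC_C$. On truth-table input of length $2^n$, first build the circuit $C$ as above, then invoke $A$ on $(C,s)$. Since $A$ runs in time polynomial in $|C|+n = poly(2^n)$, the total running time is $2^{O(n)}$, producing a $2^{O(n)}$ algorithm for $MC_{TT}$ and contradicting Theorem~\ref{hardnessofcomputeTT}. The approximation statement follows by the same reduction applied to Theorem~\ref{hardnessofapproxTT}: if an algorithm runs in time $poly(m)\cdot 2^{O(n)}$ and approximates $c_\wedge(f_C)$ within factor $(2-\epsilon)^{n/2}$, feeding it the circuit built from a $2^n$-bit truth table yields a $2^{O(n)}$-time $(2-\epsilon)^{n/2}$-approximation of $c_\wedge$ on truth-table input, contradicting Theorem~\ref{hardnessofapproxTT}.

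There is no substantive obstacle here; the theorem is essentially a reparameterization of the truth-table results. The point worth highlighting is why $poly(m)\cdot 2^{O(n)}$ is the right running-time measure in the approximation statement: for circuits produced by the reduction, $m$ can be as large as $\Theta(n\cdot 2^n)$, and this forces $poly(m)\cdot 2^{O(n)}$ to collapse to $2^{O(n)}$, exactly matching the hypothesis of Theorem~\ref{hardnessofapproxTT}. In particular one does not gain anything by being given a circuit instead of a truth table when the circuit is allowed to be that large, so the hardness is preserved.
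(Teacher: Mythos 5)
Your proposal is correct and follows exactly the route the paper intends: the paper's entire ``proof'' is the remark that the theorem follows by appealing to Theorems~\ref{hardnessofcomputeTT} and~\ref{hardnessofapproxTT}, i.e., precisely the truth-table-to-circuit reduction (via ANF or any $poly(2^n)$-size circuit construction) that you spell out. Your explicit accounting of why $poly(m)\cdot 2^{O(n)}$ collapses to $2^{O(n)}$ when $m = poly(2^n)$ is a useful detail the paper leaves implicit.
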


We conclude by showing that although $MC_C$ under Assumption~\ref{ass:1way}
requires more time, it is nevertheless contained in
the second level of the polynomial hierarchy. 

\begin{theorem}
$MC_C \in \Sigma_2^p$.
\end{theorem}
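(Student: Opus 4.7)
The plan is to exhibit the natural $\exists\forall$ characterization of $MC_C$: there exists a small witness circuit $C'$ with at most $s$ AND gates such that, for every input $x$, $C'$ and $C$ agree. Concretely, I will show
\[
c_\wedge(f_C)\leq s \iff \exists\, C'\text{ of size }\leq q(n,|C|)\text{ with }c_\wedge(C')\leq s\ \ \forall\, \mathbf{x}\in\bbF_2^n:\ f_{C'}(\mathbf{x})=f_C(\mathbf{x}),
\]
for a suitable polynomial $q$. The inner predicate is checkable in deterministic polynomial time (just evaluate both circuits on $\mathbf{x}$ and count AND gates of $C'$), so this places $MC_C$ in $\Sigma_2^p$.

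First I would dispose of trivial cases: since $c_\wedge(f_C)\leq c_\wedge(C)\leq |C|$, we may assume $s\leq |C|$, for otherwise we simply accept. The substantive step is the structural lemma: every function $f\in B_n$ with $c_\wedge(f)\leq s$ admits an XOR-AND circuit of total size polynomial in $n$ and $s$, not merely polynomial number of AND gates. This is the standard normal form for XOR-AND circuits: a minimal circuit can be rewritten as an ordered sequence of AND gates $A_1,\ldots,A_s$ together with $2s+1$ $\bbF_2$-linear forms — two per AND specifying its inputs, plus one for the overall output — each over the variables $\{x_1,\ldots,x_n,A_1,\ldots,A_{s},1\}$, obtained by pushing all XOR operations past the AND gates. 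The entire circuit is then described by a $(2s+1)\times(n+s+1)$ matrix over $\bbF_2$, i.e., $O(s(n+s))$ bits; unfolding each linear form as a tree of XOR gates produces a concrete XOR-AND circuit of size $O(s(n+s))=O((n+|C|)^2)$ with the same number of AND gates.

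Given this, the $\Sigma_2^p$-machine works as follows: existentially guess the matrix/circuit $C'$ of size at most $q(n,|C|)$; then universally guess $\mathbf{x}\in\bbF_2^n$; in polynomial time verify that $c_\wedge(C')\leq s$ and $f_{C'}(\mathbf{x})=f_C(\mathbf{x})$. Completeness follows from the normal-form lemma, soundness from the fact that any $C'$ computing a different function disagrees with $C$ on some input $\mathbf{x}$, witnessed by the universal quantifier.

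The only non-routine step is the structural lemma, but it is a standard transformation: one verifies by induction on the circuit that the value on each wire is of the form $\ell(\mathbf{x}, A_1,\ldots,A_i)$ for some $\bbF_2$-affine form $\ell$, so upon reaching the next AND gate one simply records the two incoming affine forms, and similarly for the output wire. I expect no serious obstacle; the delicate point to state carefully is the bound $s\leq |C|$ that keeps the witness size polynomial in the actual input length rather than in $s$ written in binary.
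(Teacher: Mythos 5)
Your proposal is correct and follows essentially the same route as the paper: the $\exists\forall$ characterization combined with the normal-form lemma that a minimal circuit with $M$ AND gates is determined by $2M+1$ linear forms over the $n+M+1$ symbols $x_1,\ldots,x_n,o_1,\ldots,o_M,1$, yielding a witness of size $O((M+n)^2)$, with $M=c_\wedge(f_C)\leq|C|$ keeping the witness polynomial in the input. No gaps.
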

\begin{proof}
First observe that $MC_C$ written as a language has the right form:
\[
MC_C = \{(C,s) | \exists C'\ \forall \mathbf{x}\in \bbF_2^n\ 
(C(\mathbf{x})=C'(\mathbf{x})\textrm{ and } c_\wedge(C') \leq s)\}.
\]
Now it only remains to show that one can choose the size of $C'$ is polynomial
in $n+|C|$. Specifically, for any $f\in B_n$, if $C'$ is the circuit
with the smallest number of AND gates computing $f$, for $n\geq 3$,
we can assume that
  $|C'|\leq 2(c_\wedge(f)+n)^2+c_\wedge$.
For notational convenience let $c_\wedge(f)=M$.
$C'$ consists of XOR and AND gates and
each of the $M$ AND gates has exactly two
inputs and one output. Consider some topological ordering of the AND gates,
and call the output of the $i$th AND gate $o_i$.
Each of the inputs to an AND gate is a sum (in $\bbF_2$) of
$x_i$s, $o_i$s and possibly the constant $1$.
Thus the $2M$ inputs to the AND gates and the output,
can be thought of as $2M+1$ sums over $\bbF_2$
over $n+M+1$ variables (we can think of the
constant $1$ as a variable with a hard-wired value). This can
be computed with at most
\[
 (2M+1)(n+M+1)
\leq  2(M+n)^2 
\]
XOR gates, where the inequality holds for $n\geq 3$.
Adding $c_\wedge(f)$ for the  AND gates, we get the claim.
The theorem now follows, since $c_\wedge(f)\leq |C|$\qed
\end{proof}

The relation between circuit size and multiplicative complexity
given in the proof above is not tight, and
we do not need it to be. See \cite{sergeevrelationship}
for a tight relationship.

\section*{Acknowledgements}
The author wishes to thank Joan Boyar for helpful discussions.

\bibliographystyle{splncs03}
\bibliography{refs}

 \end{document}